\newcommand{\R}{\mathbb{R}}
\newcommand{\Z}{\mathbb{Z}}
\newcommand{\Diff}{\textit{\em Diff}_+(S^1)}
\newcommand{\hi}{\mathcal{H}}
\newcommand{\al}{\mathcal{A}}
\newcommand{\id}{\mathbbm{1} }
\newcommand{\M}{\mathcal{M}}
\theoremstyle{plain}
\newtheorem{thm}{Theorem}
\newtheorem*{thm*}{Theorem}
\newtheorem{cor}[thm]{Corollary}
\newtheorem{lem}[thm]{Lemma} 
\newtheorem{prop}[thm]{Proposition}
\newtheorem*{claim*}{Claim} 
\theoremstyle{definition}
\newtheorem{defn}[thm]{Definition}
\newtheorem*{defn*}{Definition}
\theoremstyle{remark}
\newtheorem{remark}[thm]{Remark}
\theoremstyle{definition}
\theoremstyle{remark}
\title{{A formula for the relative entropy in chiral CFT}}
\author{Lorenzo Panebianco* \\ \\
 *Dipartimento di Matematica,  Università di Roma “La Sapienza” \\
Piazzale Aldo Moro 5, 00185–Roma, Italy \\
E-mail: panebianco@mat.uniroma1.it}
\date{November 2019}
\begin{document}

\maketitle

\begin{abstract}
   We prove the QNEC on the Virasoro nets for a class of unitary states extending the coherent states, that is states obtained by applying an exponentiated stress energy tensor to the vacuum. We also verify the Bekenstein Bound by computing the relative entropy on a bounded interval.
\end{abstract}

\section{Introduction}

In this work we extend the recent results of \cite{Hollands} and we prove the Quantum Null Energy Condition (QNEC) for coherent states in (1+1)-dimensional chiral Conformal Field Theory (CFT) by explicitly computing the vacuum relative entropy. \\

The first non commutative entropy notion, von Neumann’s quantum entropy, was originally designed as a Quantum Mechanics version of Shannon’s entropy: if a state $\psi$ has density matrix $\rho_\psi$ then the von Neumann entropy is given by
\[
S_\psi = - \textit{\em tr}(\rho_\psi \log \rho_\psi) \,.
\]
However, in Quantum Field Theory local von Neumann algebras are typically factors of type $III_1$ (see \cite{Longo}), no trace or density matrix exists and the von Neumann entropy is undefined. Nonetheless, the Tomita-Takesaki modular theory applies and one may consider the Araki relative entropy \cite{Araki}
\[
S(\varphi \Vert \psi)  = -(\xi|\log \Delta_{\eta, \xi} \xi) \,.
\]
Here $\xi$ and $\eta$ are standard vectors of a von Neumann algebra $\M$ and $\Delta_{\eta, \xi}$ is the relative modular operator. The quantity $S(\varphi \Vert \psi )$ measures how $\varphi$ deviates from $\psi$. From the information theoretical viewpoint, $S(\varphi \Vert \psi)$ is the mean value in the state $\varphi$ of the difference between the information carried by the state $\psi$ and the state $\varphi$. \\

In \cite{Hollands}, the relative entropy is applied in (1+1)-dimensional chiral CFT as follows: $ \M $ is the local algebra $ \al (0, +\infty) $, $\varphi$ is the vacuum state $\omega$ given by the vacuum vector $\Omega$ and $\psi$ is a coherent state $\omega_f$ given by the vector $e^{iT(f)} \Omega$, with $f$ a real smooth vector field with compact support on the real line and $T(f)$ the stress-energy tensor. If $V$ is the unitary projective representation of $\Diff$ and $\rho $ is the exponential of the vector field on the circle $C_*f$, with $C$ the Cayley transform,  then one has that $\omega_f = \omega_{V(\rho)}$, that is $\omega_f$ is represented by the vector $V(\rho)\Omega$. In \cite{Hollands} it is proved that if $f(0)=0$ then we have
\begin{equation*}
\begin{split}
    S_{(0, +\infty)}(\omega_{V(\rho)} \Vert \omega) = \frac{c}{24}\int_{0}^{+ \infty} u \left( \frac{\eta''(u)}{\eta'(u)}\right)^2 du \,,
\end{split}
\end{equation*}
where $\eta$ is the inverse of the diffeomorphism $\rho$. In this work we  remove the condition $f(0)=0$. By doing this, we are able to prove that
\[
        S_{(t, +\infty)}(\omega_{V(\rho)} \Vert \omega)   = \frac{c}{24}\int_{t}^{+ \infty} (u-t) \left( \frac{\eta''(t)}{\eta'(t)}\right)^2  du \,.
\]
More in general we notice that the same expression holds if $\rho$ is a generic diffeomorphism of the circle fixing $-1$ and with unitary derivative in such point. This expression implies the QNEC for these unitary states, namely $S(t)=S_{(t, +\infty)}(\omega_{V(\rho)} \Vert \omega) $ has positive second derivative. By repeating the computation we are also able to prove that on a generic bounded interval $(a, b)$ we have 
\begin{equation*}
\begin{split}
S_{(a,b)}(\omega_{V(\rho)}  \Vert \omega) & = -\frac{c}{12} \int_{a}^b D_{(a,b)}S\eta(u) du + \frac{c}{12} \log \eta'(a)\eta'(b)-\frac{c}{12}\log \left( \frac{\eta(b)-\eta(a)}{b-a} \right)^2\,,
\end{split}
\end{equation*}
where $S\eta$ is the Schwarzian derivative of $\eta$ and $D_{(a,b)}$ is the density of the dilation operator associated to $(a,b)$. In the case $(a,b)=(-r,r)$ we verify the Bekentein Bound. We also provide the formulas for the relative entropies obtained by exchanging the states $\omega$ and $\omega_{V(\rho)} $. In this case we show a counterexample to the convexity of the function $S_{(t, +\infty)}(\omega \Vert \omega_{V(\rho)} )  $.


\section{Notation and CFT basics}
In this section we describe our notation and some basic facts about the structure of a two-dimensional chiral CFT. The material is standard and more details may be found e.g. in \cite{quantum}. 
The starting point is the {\em Virasoro algebra}, that is the infinite dimensional Lie algebra Vir with generators $\{L_n, c\}_{n \in \Z}$ obeying the relations
\begin{equation} \label{eq:vir}
    [L_n,L_m]=(n-m)L_{n+m}+ \frac{1}{12}n(n^2-1)\delta_{n,-m}c\,,\quad [L_n,c]=0\,.
\end{equation}
A {\em positive energy representation} of Vir on a Hilbert space $\hi$ is a representation such that

    (i) $L_n^*=L_{-n}$,
    
    (ii) $L_0$ is diagonalizable with non-negative eigenvalues of finite multiplicity,
    
    (iii) the central element  is represented by $c\id$. \\

From now, we assume such a positive energy representation on the infinite dimensional separable Hilbert space  $\hi$. We assume furthermore that $\hi$ contains a vector $\Omega$ annihilated by $L_{-1}, L_0, L_{+1}$ ($\mathfrak{sl}(2,\R)$-invariance) which is a highest weight vector of weight $0$, that is $L_n\Omega=0$ for all $n>0$. In \cite{duality}, \cite{FVOAAB}, \cite{cocycle}, \cite{projective} one can find the proof of the bound
\begin{equation} \label{eq:est}
\Vert (1+L_0)^k L_n \Psi \Vert \leq \sqrt{c/2} (|n|+1)^{k+3/2} \Vert (1+L_0)^{k+1}\Psi \Vert
\end{equation}
for $\Psi \in \mathcal{V}= \bigcap_{k \geq 0} \mathcal{D}(L_0^k)$. Given a smooth function $f(z)$ on the circle, one defines the stress energy tensor
\[
T(f)=-\frac{1}{2\pi}\sum_{n=-\infty}^{+ \infty} \left( \int_{S^1}f(z)z^{-n-2}dz \right) L_n\,.
\]
Notice that $T(f) $ has zero expectation on the vacuum, that is $ (\Omega|T(f)\Omega)=0 $. 
This follows by the commutation relations of the Virasoro algebra, since $ L_{-n} \Omega $ is an $n$-eigenvalue of the conformal hamiltonian $L_0$. 
The notation
\[
T(f)= \int_{S^1}T(z)f(z)dz\,, \quad T(z)= -\frac{1}{2\pi}\sum_{n=-\infty}^{+ \infty} z^{-n-2} L_n\,,
\]
is widely used. Moreover, the estimate \eqref{eq:est} shows that $T(f)$ is well defined for any function $f$ in the Sobolev space  $W^{3/2,1}(S^1)$ and that $\mathcal{V}$ is $T(f)$-invariant for any such function. We recall that the norm of $W^{s,p}$ is
\[
\Vert f \Vert_{s,p} = (\sum_n | \hat{f}_n|(1+|n|)^{ps})^{1/p} \,,
\]
where $\hat{f}_n$ is the $n$-th Fourier coefficient. If we now define
\[
\Gamma f(z)=-z^2\overline{f(z)}\,,
\]
then the stress-energy tensor is an essentially self-adjoint operator on any core of $L_0$ (such as $\mathcal{V}$) for any function $f \in W^{3/2,1}(S^1)$ obeying the reality condition
\begin{equation} \label{eq:real}
    \Gamma f = f\,.
\end{equation}
More in general, one has that $T(f)^*=T(\Gamma f)$. We point out that $T(f)$ must be thought of as an operator depending not on the function $f(z)$, but on the vector field $f(z)\frac{d}{dz}$. In particular, we have that
\begin{equation} \label{eq:fields}
    L_n = i T(l_n)\,, \quad l_n= z^{n+1}\frac{d}{dz}\,.
\end{equation}
Notice that by changing variables $z=e^{i \theta}$, the stress energy tensor may be written as
\[
T(f) = \sum_{n}\hat{f}_n L_n \,,
\]
with $f=f(\theta)$. \\



We now make the connection with the representation of the diffeomorphism group on the circle. To do this, given a function $f \in C^\infty (S^1)$ real in the sense of equation \eqref{eq:real}, we denote by $\textit{\em Exp}(tf) = \rho_t \in \textit{\em Diff}_+(S^1) $ the 1-parameter flow of orientation preserving diffeomorphisms generated by the vector field $f$. In other words, $\rho_t$ is uniquely determined by the conditions
\begin{equation} \label{eq:flow}
    \frac{\partial}{\partial t} \rho_t(z) = f(\rho_t(z))\,, \quad \rho_0 = \textit{\em id}\,.
\end{equation}
Notice that $\rho_t$ acts as the identity for all $t \in \R$ outside the support of $f$. The unitary operators $W(f)=e^{iT(f)}$ can be thought of as representers of the diffeomorphisms $\textit{\em Exp}(f)$. More precisely, there exists a strongly continuous unitary projective representation $\Diff \ni\rho \mapsto V(\rho) \in U(\hi)$ satisfying:

V1) $V$ leaves invariant $\mathcal{V}$,

V2) $V$ satisfies the composition law
\[
V(\rho_1)V(\rho_2) = e^{icB(\rho_1, \rho_2)}V(\rho_1 \rho_2)\,,
\]
with $B(\rho_1, \rho_2)$ the {\em Bott 2-cocycle}
\begin{equation} \label{eq:bott}
    B(\rho_1, \rho_2)= -\frac{1}{48 \pi} \textit{\em Re}\int_{S^1} \log(\rho_1 \rho_2)'(z) \frac{d}{dz}\log \rho_2 '(z) dz \,.
\end{equation}

V3) $\frac{d}{dt}V(\textit{\em Exp}(tf))=itT(f)$ on any core of $T(f)$. In particular, we have that $e^{iT(f)}=e^{i\alpha(t)}V(\rho_t)$, with $\alpha'(0)=0$.

We now describe the commutation rules between two operators $e^{iT(f)}$ and $e^{iT(g)}$. For a smooth diffeomorphism $\rho$ on the circle, the {\em Schwarzian derivative} is defined by 
\[
S\rho(z) = \left( \frac{\rho''(z)}{\rho'(z)}\right)' - \frac{1}{2}\left( \frac{\rho''(z)}{\rho'(z)}\right)^2 \,.
\]
It has been shown in \cite{quantum}, which uses results of \cite{cocycle}, \cite{projective} and \cite{laredo}, that on the domain $\mathcal{V}$ we have the relations
\begin{gather}
	 V(\rho)T(g)V(\rho)^* = T(\rho_* g) + \beta(\rho,g) \id\,, \label{eq:transformation} \\
	 i[T(f), T(g)] = T(f'g - g'f) + c \omega(f,g) \id \label{eq:commutation}  \,,
\end{gather}
meaning that \eqref{eq:transformation} holds on $\mathcal{V}$ and \eqref{eq:commutation} on $\mathcal{V} \cap D(T(f)T(g)) \cap D(T(g)T(f))$. Here $\rho_* g$ is the push-forward of the vector field $g(z)\frac{d}{dz}$ through $\rho$ and 
\[
\beta(\rho,g)=-\frac{c}{24\pi}\int_{S^1}g(z)S\rho(z)dz\,, \quad \omega(f,g) = -\frac{c}{48\pi}\int_{S^1} ( f(z)g'''(z)-f'''(z)g(z) )dz \,.
\]
Equation \eqref{eq:transformation} implies that we have the commutation relations 
\[
W(f)W(g)=e^{i\beta(\rho,g)}W(\rho_* g)W(f)\,,
\]
with $W(\cdot)=e^{iT(\cdot)}$ and $\rho= \textit{\em Exp}(f)$. The local net of von Neumann algebras is then given by
\[
\mathcal{A}(I)= \{ W(f)\colon f \in C^\infty_{\mathbb{R}}(S^1)\,, \;  \textit{\em supp}(f) \subset I \}''\,,
\]
with $I$ any open non-dense interval of the circle. 
\begin{defn}
	We will say that a diffeomorphism $\rho$ in $\Diff$ is {\em localized in an interval } $I$ of the circle if $\rho(z)=z$ for each $z \in I'$.
\end{defn}

\begin{lem} \label{lem:haag}
	If $\rho$ is a diffeomorphism localized in $I$, then $V(\rho)$ belongs to $\al(I)$. 
\end{lem}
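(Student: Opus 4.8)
The plan is to reduce the statement to the Haag duality / locality structure of the net together with the covariance property V2--V3. First I would invoke the standard fact that any diffeomorphism $\rho$ localized in $I$ can be written as $\rho = \textit{\em Exp}(f)$ for a real smooth vector field $f$ with $\textit{\em supp}(f) \subset I$; this is the content of the correspondence between one-parameter flows and their generators, using that $I$ is an interval (hence $\rho$ is isotopic to the identity through diffeomorphisms localized in $I$), so one may pick a smooth path $\rho_t$ from $\textit{\em id}$ to $\rho$ with each $\rho_t$ localized in $I$, differentiate to get a time-dependent vector field supported in $I$, and then — since we only need \emph{some} generator — either appeal to the simply-connectedness argument or directly note that when $\textit{\em supp}(f)\subset I$ the flow $\rho_t$ stays localized in $I$ by \eqref{eq:flow}.

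Next, by property V3 we have $e^{iT(f)} = e^{i\alpha(1)} V(\rho_1) = e^{i\alpha(1)}V(\rho)$ for a phase $\alpha(1)\in\R$. Since $\textit{\em supp}(f)\subset I$, the operator $W(f)=e^{iT(f)}$ lies in $\al(I)$ by the very definition of the net $\al(I)=\{W(f)\colon \textit{\em supp}(f)\subset I\}''$. A phase is a scalar multiple of $\id$ and hence central in every von Neumann algebra, so $V(\rho) = e^{-i\alpha(1)}W(f) \in \al(I)$, which is the claim. If one prefers to avoid writing $\rho$ as the time-one flow of a \emph{single} autonomous field, the same conclusion follows by chopping the path $\rho_t$ into small pieces $\rho_{t_{k-1}}^{-1}\rho_{t_k}$, each close to the identity and localized in $I$, writing each piece via V3 as a phase times $W(f_k)$ with $\textit{\em supp}(f_k)\subset I$, and multiplying up using the cocycle law V2 — the Bott cocycle $B$ only contributes further phases, which are again central and harmless.

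The main obstacle is the first step: justifying that a diffeomorphism localized in $I$ really is in the image of $\textit{\em Exp}$ restricted to vector fields supported in $I$, i.e.\ that $\textit{\em Diff}$ localized in $I$ is generated (as a group, or connected by smooth paths) by such exponentials. This is where one must be slightly careful, because $\textit{\em Diff}_+(S^1)$ is not exponential as a whole; however, restricting to an interval $I$ the group of diffeomorphisms supported in $I$ is isomorphic to $\textit{\em Diff}_c(\R)$ on the corresponding open interval, which \emph{is} connected and in fact equals the set of finite products of exponentials of compactly supported fields (indeed, any such diffeomorphism sufficiently $C^1$-close to the identity is a single exponential). Once this topological input is in place, the rest is the bookkeeping of phases described above, and the identification $V(\rho)\in\al(I)$ is immediate. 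I expect the write-up to spend essentially all its effort on this localization-and-connectedness argument and then finish in one line.
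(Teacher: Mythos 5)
Your reduction of the lemma to writing $\rho$ as a product of exponentials of vector fields supported in $I$ is exactly where the argument breaks, and neither of the two devices you offer closes the gap. First, the claim that a localized diffeomorphism sufficiently $C^1$-close to the identity is a single exponential is false: the exponential map of diffeomorphism groups is not locally surjective, and there are diffeomorphisms supported in an interval, arbitrarily close to the identity, whose germs at interior fixed points do not embed in any flow; your fallback of chopping the isotopy into small pieces $\rho_{t_{k-1}}^{-1}\rho_{t_k}$ needs the same false claim for each piece. What is true is that finite products of $\textit{Exp}(f_k)$ with $\textit{supp}(f_k)\subset I$ exhaust the diffeomorphisms equal to the identity on a \emph{neighbourhood} of $S^1\setminus I$, but this rests on fragmentation plus the perfectness/simplicity theorems for $\textit{Diff}_c$, a nontrivial input you only gesture at. Second, and independently of that input, ``localized in $I$'' in the paper means only $\rho=\textit{id}$ on $I'$, so the set where $\rho\neq\textit{id}$ may accumulate at the endpoints of $I$; such a $\rho$ is not isomorphic to an element of $\textit{Diff}_c(\R)$ in the way you assert, and it is \emph{never} a finite product of exponentials of fields with $\textit{supp}(f_k)\subset I$, since any such product is the identity near $\partial I$. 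To rescue your route you would need an additional limiting step: approximate $\rho$ by diffeomorphisms trivial near $\partial I$, control the phases from V2, and use strong continuity of $V$ together with the weak closedness of $\al(I)$. None of this appears in the proposal, and it is where all the real work would lie.

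The paper's own proof avoids the group theory entirely: by Haag duality $\al(I)=\al(I')'$, so it suffices to show $V(\rho)W(g)V(\rho)^*=W(g)$ for every real $g$ with $\textit{supp}(g)\subset I'$, and this is immediate from the covariance relation \eqref{eq:transformation}, because $\rho_*g=g$ (as $\rho$ is the identity on $\textit{supp}(g)$) and $\beta(\rho,g)=-\frac{c}{24\pi}\int_{S^1} g\,S\rho =0$ (the Schwarzian derivative $S\rho$ vanishes where $\rho$ is the identity). Where your strategy requires delicate facts about $\Diff$ and an approximation argument you did not supply, the duality argument needs only the known duality property of the Virasoro net and the transformation law already quoted in the paper.
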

\begin{proof}
	By duality we can prove that $V(\rho)$ belongs to $\al(I')'$, that is 
	\[
	V(\rho)W(g)V(\rho)^* = W(g)\,, \quad \text{supp} (g) \subset I'\,.
	\]
	But this identity follows from \eqref{eq:transformation}, so the thesis is proved. 
\end{proof}

Going back to the real line, the stress energy tensor $\Theta$ on $\R$ is defined by the formula $ \Theta(f)=T(C_*f) $,
with $C_*f$ the pushforward of the vector field on the real line $f(u)\frac{d}{du}$ through the Cayley trasform $ C(u)= (1+iu)/(1-iu) $. By definition, the stress energy tensor on the real line is then
\begin{equation} \label{eq:real}
\Theta(u)= \left( \frac{dC(u)}{du} \right)^2 T(C(u))= -\frac{4}{(1-iu)^4}T(C(u))\,. 
\end{equation}
Using the equations \eqref{eq:real} and \eqref{eq:fields}, we obtain an expression for the generators of $\mathfrak{sl}(2,\R)$. In particular, the generator $D=-\frac{i}{2}(L_1 - L_{-1})$ of dilations is given by
\[
D=\int_{-\infty}^{+\infty}u\Theta(u)du\,.
\]
We recall that, by the Bisognano-Wichmann theorem, we have $D=-2\pi \log \Delta$: the modular dynamic is implemented by boost transformations. \\

We end this section with a few remarks about the representation $V$ of $\Diff$ and the stress energy tensor. Given $n$ points on $S^1$, say $z_i$ with $i = 1, \dots, n$, we denote by $B(z_1, \dots, z_n)$ the group of all the $C^1$ diffeomorphisms $\rho$ of $S^1$ which are smooth except that on the points $z_i$ and such that $\rho(z_i)=z_i$ and $\rho'(z_i)=1$. A similar notation will be used in the real line picture, where in this case we will be particularly interested in $B(\infty)$. If $B$ is the union of all the $B(z_1, \dots, z_n)$ with $z_i \in S^1$, then the unitary projective representation $V$ can be extended to $B$ in such a way that the properties V1)-V3) are still satisfied. For details, see \cite{Hollands}. The relation V3) is then satisfied by any real valued $C^1$ function $f$ on $S^1$ which is smooth except that on a finite number of points $z_i$ such that $f(z_i)=0$ and $f'(z_i)=0$. We precise that if $g$ is a piecewise smooth, real, compactly supported $C^1$-function on $S^1$, then by standard arguments $g \in W^{s,1}$ for any $s<2$. Therefore, $T(g)$ is a closable essentially self-adjoint operator and V3) is verified. Furthermore, in \cite{Carpi} it is proved that if $g_n \to g$ in $W^{3/2, 1}$ then $e^{iT(g_n)} \to e^{iT(g)}$ in the strong operator topology.

The groups $B(z, z')$  are of interest also for the following reason. Given two points $z$ and $z'$ of the circle, consider a diffeomorphism $\rho$ in $B(z,z')$, that is a diffeomorphism in $\Diff$ fixing $z$ and $z'$ and with unital derivative in such points. Define $I=(z,z')$, where the interval is obtained moving counterclockwise from $z$ to $z'$. Then thanks to Lemma \ref{lem:haag} it is possible to define a diffeomorphism $\rho_+$ localized in $I$ and a diffeomorphism $\rho_-$ localized in $I'$ such that $\rho= \rho_+ \rho_- = \rho_- \rho_+$. If $\rho= \text{Exp}(f)$, then this is possible if $f$ and its derivative vanish at the points $z$ and $z'$.

\section{Relative entropy}
We recall the definition of the relative entropy in terms of relative modular operators. We use the notation of $\cite{Entropy}$. 

Let $\mathcal{M}$ be a von Neumann algebra on $\hi$ and let $\varphi$, $\psi$ be two faithful normal positive linear functionals on $\mathcal{M}$ given by the standard vectors $\xi, \eta$. The relative modular operator $\Delta_{\xi, \eta}=S^*_{\xi, \eta}S_{\xi, \eta}$ is given by the polar decomposition 
\[
S_{\xi, \eta}=J\Delta^{1/2}_{\xi, \eta}\,,
\]
with $S_{\xi, \eta}$ the closure of the antilinear operator $X\xi \mapsto X^*\eta$, $X \in \mathcal{M}$. The Connes Radon-Nikodym unitary cocycle can be written as
\[
(D \psi : D \varphi)_t = \Delta_{\eta, \xi}^{it} \Delta^{-it}_{\xi}\,.
\]
The Araki relative entropy is defined by \cite{Araki}
\begin{equation} \label{eq:entropy}
        S(\varphi \Vert \psi)  = -(\xi|\log \Delta_{\eta, \xi} \xi) = - \int_0^{\infty} \log \lambda d(\xi|E_\lambda(\Delta_{\eta, \xi})\xi)\,.
\end{equation}
Notice that
\begin{equation*}
    \begin{split}
        S(\varphi \Vert \psi) 
        & = i \frac{\textit{\em d}}{\textit{\em dt}}\varphi((D \psi : D \varphi)_t) \Big\vert_{t=0}  = -i \frac{\textit{\em d}}{\textit{\em dt}}\varphi((D \varphi : D \psi)_t) \Big\vert_{t=0} \,,
    \end{split}
\end{equation*}
where the last equality follows from the identity $(D \varphi : D \psi)^*_t = (D \psi : D \varphi)_t $. 
It satisfies natural conditions of positivity and monotonocity, that is
\[
S(\varphi \Vert \psi)  \geq 0 \,, \qquad S(\varphi\vert_\mathcal{N} \Vert \psi\vert_\mathcal{N}) \leq  S(\varphi \Vert \psi)\,,
\]
with $\mathcal{N} \subseteq \mathcal{M}$ a von Neumann subalgebra (see \cite{Araki} or \cite{Ohya} to check when the monotonicity property holds). Notice that the relative entropy can be infinite. In this paper we will study in particular the case when $\mathcal{M}= \mathcal{A}(I)$ is a local algebra of the Virasoro net, $\varphi=\omega$ is the vacuum state and $\psi = \omega_U $ is the state $\omega_U(\cdot)=\omega(U^* \cdot U)$ given by the vector $U\Omega$, with $U$ some unitary operator. To this purpose, we give a useful lemma.

\begin{lem} \label{lem:2} Let $\M$ be a von Neumann algebra on $\hi$ and $u \in U(\hi)$ a unitary operator. Consider two standard vectors $\xi$ and $\eta$ for $\M$.

(i) $u\xi$ and $u\eta$ are standard vectors for $u \M u^*$.

(ii) $\Delta_{u\xi, u\eta}^{u \M u^*} = u \Delta_{\xi, \eta}^{ \M } u^*$.

(iii) If $u=vv'$, with $v$ and $v'$ unitary operators in $\M$ and $\M'$ respectively, then $\Delta_{\xi, u\eta}^{\M} = v' \Delta^{\M}_{\xi, \eta} v'^*$.

(iv) If $u=vv'$ as in (iii), then $\Delta_{u \xi, \eta}^{\M} = v \Delta^{\M}_{\xi, \eta} v^*$.
\end{lem}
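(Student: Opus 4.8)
The plan is to verify each of the four assertions directly from the definition of the relative modular operator via the Tomita operator $S_{\xi,\eta}$, using the fact that the polar decomposition is unique and that $S_{\xi,\eta}$ is characterized as the closure of $X\xi \mapsto X^*\eta$ for $X \in \M$. For (i), I would first recall that a vector $\zeta$ is standard for a von Neumann algebra $\mathcal N$ iff it is cyclic and separating for $\mathcal N$; since $u$ is unitary, $u\M u^*$ is a von Neumann algebra, and $u\xi$ is cyclic (resp.\ separating) for $u\M u^*$ precisely because $\xi$ is cyclic (resp.\ separating) for $\M$, as $\overline{(u\M u^*) u\xi} = u\,\overline{\M\xi} = u\hi = \hi$ and analogously for the separating property. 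The same argument applies to $u\eta$.

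For (ii), the key observation is that conjugation by $u$ intertwines the relevant Tomita operators: for $Y = uXu^* \in u\M u^*$ with $X \in \M$, one computes $u S_{\xi,\eta} u^* (Y u\xi) = u S_{\xi,\eta}(X\xi) = u X^*\eta = Y^* u\eta$, so $u S_{\xi,\eta}^{\M} u^*$ is (the closure of) the Tomita operator $S_{u\xi,u\eta}^{u\M u^*}$. Taking adjoints and forming $S^*S$ then gives $\Delta_{u\xi,u\eta}^{u\M u^*} = u S_{\xi,\eta}^* u^* u S_{\xi,\eta} u^* = u \Delta_{\xi,\eta}^{\M} u^*$; one should check closability/core issues, but these are standard since $u$ is bounded with bounded inverse.

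For (iii) and (iv), I would exploit the hypothesis $u = vv'$ with $v \in \M$, $v' \in \M'$, so that $u\M u^* = \M$ (conjugation by an element of $\M$ or of $\M'$ fixes $\M$). For (iii), compute the Tomita operator for the pair $(\xi, u\eta)$: for $X \in \M$, $S_{\xi,u\eta}(X\xi) = X^* u\eta = X^* v v'\eta = v' X^* v \eta$ (using $v' \in \M'$ commutes with $X^* v \in \M$), while $v' S_{\xi,\eta} v'^* (X\xi) = v' S_{\xi,\eta}(v'^* X v' \cdot$ — wait, more cleanly: since $v'^* \xi$ need not simplify, the cleanest route is to note $S_{\xi, u\eta} = S_{\xi, v\eta}$ because $v'\eta$ and $\eta$ — no. I would instead argue: $v' \in \M'$ so $X v'\eta = v' X\eta$ is false in general; rather use that $v'$ unitary in $\M'$ gives, for $X\in\M$, $(v' S_{\xi,\eta}^{\M} v'^*)(X\xi) = v' S_{\xi,\eta}^{\M}(X\xi) $ since $v'^* X \xi$ — no, $v'^*$ does not commute past $\xi$. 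The correct computation: $v' \Delta_{\xi,\eta}^{\M} v'^*$ is positive self-adjoint, and $v' S_{\xi,\eta}^{\M} v'^* = v' J \Delta^{1/2} v'^*$; since $J v'^* = $ ... I would instead directly verify that $v' S_{\xi,\eta}^{\M}$ restricted appropriately equals $S_{\xi, v'^* v' v \eta}$... ). The genuinely clean argument is: for $X \in \M$, $X\xi \mapsto X^* (v\eta)$ defines $S_{\xi,v\eta}^\M$; and since $v \in \M$, for the pair $(\xi,v\eta)$ one has $S_{\xi,v\eta}^{\M} = S_{\xi,\eta}^{\M}$ composed with left multiplication — concretely $X^* v\eta = (v^* X)^* \eta$, but $v^*X \in \M$ doesn't send $\xi$ to $X\xi$. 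I will therefore present (iii) and (iv) by the uniqueness-of-polar-decomposition method: show $v' \Delta_{\xi,\eta}^\M v'^{*}$ is a positive self-adjoint operator whose square root, multiplied by $J$, has the defining action $X\xi \mapsto X^* u\eta$; the antiunitary part $J$ is unchanged because $J$ depends only on $\M$ and $\xi$ (Tomita's theorem: $J$ is the modular conjugation of $(\M,\xi)$, independent of $\eta$), so $S_{\xi,u\eta} = J \Delta_{\xi,u\eta}^{1/2}$ forces $\Delta_{\xi,u\eta}^{1/2} = J S_{\xi,u\eta}$, and a short computation with $v' \in \M'$ (which commutes with every $X \in \M$, hence $X^* v' \eta = v' X^*\eta$ after noting $v'\eta$ is the target) identifies this with $v' \Delta_{\xi,\eta}^{1/2} v'^*$. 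Part (iv) is the mirror image with the roles of $\xi$ and $\eta$, $\M$ and $\M'$, interchanged, or equivalently follows from (iii) applied to the commutant together with the relation $\Delta_{\xi,\eta}^{-1} = \Delta_{\eta,\xi}$ and $\Delta_{u\xi,\eta}^{\M} = (\Delta_{\eta, u\xi}^{\M})^{-1}$.

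The main obstacle I anticipate is bookkeeping with unbounded operators: one must be careful that the formal manipulations of $S_{\xi,\eta}$ respect domains and closures, and in particular that conjugating a closed operator by a unitary yields a closed operator with the expected core, and that the polar decomposition is preserved under such conjugation. These are routine but need a sentence of justification; the algebraic heart of (iii)/(iv) — that a unitary from $\M$ or $\M'$ changes only the $\Delta^{1/2}$ factor and not the conjugation $J$ — is where the commutant relations $v' X = X v'$ (for $X\in\M$) do the real work.
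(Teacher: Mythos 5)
Parts (i) and (ii) of your proposal are fine and coincide with the paper's argument (conjugating the Tomita operator by $u$ and using uniqueness of the closure and $\Delta=S^*S$). The problem is (iii)--(iv): you never actually complete a proof there. What you submit is a sequence of abandoned computations followed by a sketch whose key claim is wrong: you assert that in the polar decomposition $S_{\xi,u\eta}=J\Delta_{\xi,u\eta}^{1/2}$ the antiunitary part is the modular conjugation $J_\xi$ of $(\M,\xi)$, ``independent of $\eta$''. That is false for relative Tomita operators of a general cyclic and separating pair: the antiunitary part does depend on the second vector (it equals $J_\xi$ only when that vector lies in the natural cone of $\xi$; in the type $I$ model $S(M)=(A^*)^{-1}M^*B$ the antiunitary part is $M\mapsto u_A M^* u_B$, which involves the phase of $B$). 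So the ``genuinely clean argument'' you fall back on does not go through, and (iv) is likewise only asserted.

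There is also a sign that you did not sort out the indexing convention, which is exactly what derailed your computations. With the literal definition stated in the paper's Section 3 ($S_{\xi,\eta}$ = closure of $X\xi\mapsto X^*\eta$), your own correct first step $S_{\xi,u\eta}(X\xi)=X^*u\eta=v'X^*v\,\eta=v'S_{\xi,\eta}\bigl(v^*X\xi\bigr)$ closes up (since $v^*$ maps the core $\M\xi$ onto itself) to $S_{\xi,u\eta}=v'S_{\xi,\eta}v^*$, hence $\Delta_{\xi,u\eta}=v\,\Delta_{\xi,\eta}\,v^*$ --- the unitary from $\M$, not from $\M'$. The identities displayed in the Lemma, and the cocycle/entropy formulas they feed into later (e.g.\ $S(\varphi\Vert\varphi_U)=-(V^*\xi|\log\Delta_\xi V^*\xi)$), correspond to the standard Araki indexing in which the second subscript labels the base vector, i.e.\ $S_{\xi,\eta}$ is the closure of $X\eta\mapsto X^*\xi$; with that convention the same two-line computation on the core $\M u\eta$ gives $S_{\xi,u\eta}=v\,S_{\xi,\eta}\,v'^*$ (the identity the paper invokes) and therefore $\Delta_{\xi,u\eta}=v'\Delta_{\xi,\eta}v'^*$, and (iv) follows either by the mirror computation or, as the paper does, from (ii) and (iii). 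The fix is therefore not a polar-decomposition or $J$-invariance argument at all: pick one convention, do the one-line intertwining computation on the dense core, note that conjugation by unitaries preserves closures and cores, and form $S^*S$. As it stands, your proposal has a genuine gap at (iii)--(iv), both because the argument is unfinished and because the step meant to replace the finished argument is based on an incorrect claim about the relative modular conjugation.
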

\begin{proof} (i)
$\overline{u \M u^*(u \xi)}= \overline{\M \xi}$, so $u \xi$ is cyclic and the same holds for $u \eta$. Since the commutant of $u \M u^*$ is $u \M' u^*$, the assertion follows. (ii)
The proof is standard: one first proves that $S_{u\xi, u\eta}^{u \M u^*}=u S_{\xi, \eta} u^*$ and then uses the fact that $\Delta_{\xi, \eta}=S_{\xi, \eta}^* S_{\xi, \eta}$. (iii) In this case we have that $u \M u^* = \M$. The thesis follows by noticing that $S_{\xi, u\eta}=vS_{\xi, \eta} v'^*$ and applying the definition of $\Delta_{\xi, u\eta}$. (iv) The statement follows by (ii) and (iii).
\end{proof}
We now use the previous lemma to do some considerations about the relative entropy between a state $\varphi$ and a state $\psi$.  
Notice that, by the second point of the previous lemma, in general we have
\begin{equation} \label{eq:cov}
    S_{\mathcal{M}}(\varphi \Vert \psi) = S_{U\mathcal{M}U^*}(\varphi_U \Vert \psi_U)\,.
\end{equation}
Moreover, in the particular case $\psi=\varphi_U$ with $U$ as in the lemma, we obtain that the relative entropy \eqref{eq:entropy} has expression
\begin{equation} \label{eq:e2}
S(\varphi \Vert \varphi_U)  = -(V^*\xi|\log \Delta_{\xi} V^*\xi)\,.    
\end{equation}
Similarly, we also have that
\begin{equation} \label{eq:e1}
    \begin{split}
        S(\varphi_U \Vert \varphi) & = - (V\xi|\log \Delta_{\xi} V\xi)\,.
    \end{split}
\end{equation}

These two expressions of the relative entropy will be the starting point of a more explicit formula of the relative entropy.

\section{Relative entropy on a half light-ray}

We now apply the theory of the previous section in (1+1)-dimensional chiral CFT. In particular, the situation is the following: $ \M $ is the local algebra $ \al (0, + \infty) $, $\varphi$ is the vacuum state $\omega$ given by the vacuum vector $\Omega$ and $\omega_U$ is the state associated to $U=V(\rho)$ for some diffeomorphism $\rho$. We recall that if $\rho=\text{ Exp}(f)$ for some real smooth function $f$ then we have $\omega_{W(f)}=\omega_{V(\rho)}$. \\

We now come back to the real line picture and we consider a diffeomorphism $\rho$ in $B(0, \infty)$, so that $\rho(0)=0$ and $\rho'(0)=1$. We also have $\rho(u) \to 0$ and $\rho'(u) \to 1$ if $u \to \infty$. It then follows that $V(\rho)= V(\rho_+)V(\rho_-)$ up to a phase,  where $V(\rho_+)$ belongs to $\al(0, + \infty)$ and $V(\rho_-)$ belongs to $\al(- \infty , 0)$. Notice that the same properties hold for the map 
\begin{equation} \label{eq:eta}
	\eta=\rho^{-1}= \text{Exp}(-f) \,.
\end{equation}
It then follows by the formulas \eqref{eq:transformation} and \eqref{eq:e1} that
\begin{equation*}
    \begin{split}
        S_{(0, +\infty)}(\omega_{V(\rho)} \Vert \omega) 
        & = -\frac{c}{12}\int_{0}^{+ \infty} u S\eta(u)du \,.
    \end{split}
\end{equation*}
Notice that this integral is absolutely convergent, since through the Cayley transform it reduces to an integral of a bounded function on the upper half circle. To prove this one also has to take advantage of the chain rule for the Schwarzian derivative 
\[
S(f \cdot g)(z) = g'(z)^2 Sf(g(z))+Sg(z)\,. 
\]
Therefore, integrating by parts we obtain the expression
\begin{equation} \label{eq:f1}
    S_{(0, +\infty)}(\omega_{V(\rho)} \Vert \omega) = \frac{c}{24}\int_{0}^{+ \infty} u \left( \frac{\eta''(u)}{\eta'(u)}\right)^2 du \,.
\end{equation}
This formula holds if $\rho \in B(\infty)$ verifies $\rho(0)=0$ and $\rho'(0)=1$. However, by an approximation procedure it is proved in \cite{Hollands} that the same formula holds with the only hypothesis $\rho(0)=0$. The proof is done in the case $\rho = \text{Exp}(f)$ where $f$ has compact support, but it still works with the hypotheses on $\rho$ stated above. The proof is based on the ansatz on the Connes cocycle 
\begin{equation} \label{eq:coc}
    (D \omega_{V(\rho)} : D \omega)= V (\rho_+ \cdot \delta_t \cdot \rho^{-1}_+ \cdot \delta_{-t}) e^{ia(t)}\,, 
\end{equation}
with $\delta_t(u)=e^{-2 \pi t}u$ and $a(t) \in \mathbb{R}$ to be determined. This ansatz has been proved in \cite{Hollands}, but we show it here for the sake of completeness.
\begin{prop}
If $\rho(0)=0$, with $\rho$ in $B(\infty)$, then the equation \eqref{eq:coc} has a solution.
\end{prop}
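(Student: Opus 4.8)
The plan is to verify directly that the right-hand side of \eqref{eq:coc} is a genuine Connes cocycle, i.e. that it satisfies the cocycle identity for the pair $(\omega_{V(\rho)}, \omega)$ with respect to $\M = \al(0,+\infty)$, and that its associated modular flow intertwines the two states correctly. Concretely, recall from Lemma \ref{lem:2} that if $V(\rho) = v v'$ with $v = V(\rho_+) \in \al(0,+\infty)$ and $v' = V(\rho_-) \in \al(0,+\infty)' = \al(-\infty,0)$ (this factorization holding up to a phase because $\rho \in B(\infty)$ and $\rho(0)=0$, $\rho'(0)=1$), then $\Delta_{\Omega, V(\rho)\Omega}^{\M} = v \Delta_\Omega^{\M} v^*$ by part (iv). Hence the relative modular flow is
\[
\Delta_{\Omega, V(\rho)\Omega}^{it} \Delta_\Omega^{-it} = v \Delta_\Omega^{it} v^* \Delta_\Omega^{-it} = V(\rho_+)\, \delta_t \, V(\rho_+)^*\, \delta_{-t}
\]
where I used the Bisognano--Wichmann identification $\Delta_\Omega^{it} = V(\delta_{-2\pi t})$ (dilations), writing $\delta_t$ for the representer $V$ of the dilation by $e^{-2\pi t}$; modular covariance $\Delta_\Omega^{it} V(\rho_+) \Delta_\Omega^{-it} = V(\delta_t \rho_+ \delta_{-t})$ from \eqref{eq:transformation} then gives exactly $V(\rho_+ \cdot \delta_t \cdot \rho_+^{-1} \cdot \delta_{-t})$ up to a scalar phase $e^{ia(t)}$ absorbing the accumulated Bott cocycles. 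That scalar is unitary, so it does not affect the cocycle being a partial isometry of the required form.

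The steps, in order, would be: (1) Use the hypothesis $\rho \in B(\infty)$, $\rho(0)=0$, $\rho'(0)=1$ together with the decomposition remark at the end of Section 2 to write $V(\rho) = e^{i\phi} V(\rho_+) V(\rho_-)$ with $V(\rho_\pm)$ localized in $(0,+\infty)$ and $(-\infty,0)$ respectively. For a general $\rho$ with only $\rho(0)=0$ (no condition on $\rho'(0)$) one first absorbs a dilation: write $\rho = \sigma \circ \rho_1$ where $\rho_1 \in B(\infty)$ has unit derivative at $0$ and $\sigma$ is a diffeomorphism agreeing with a dilation near $0$; but since dilations are in the modular group this can be handled, or one argues by the approximation procedure of \cite{Carpi} cited in the excerpt, approximating $\rho$ in $W^{3/2,1}$ by diffeomorphisms in $B(\infty)$ with unit derivative. (2) Invoke Lemma \ref{lem:2}(iv) to compute $\Delta_{\Omega, V(\rho)\Omega}^{\M}$ in terms of $\Delta_\Omega^{\M}$ and $V(\rho_+)$. (3) Form the Connes cocycle $(D\omega_{V(\rho)} : D\omega)_t = \Delta_{\Omega, V(\rho)\Omega}^{it}\Delta_\Omega^{-it}$ and simplify using Bisognano--Wichmann and the covariance relation \eqref{eq:transformation}, producing the claimed $V(\rho_+ \cdot \delta_t \cdot \rho_+^{-1} \cdot \delta_{-t})$. (4) Track the phase: each application of the composition law V2) contributes a Bott term $e^{icB}$, and collecting these defines $a(t)$; one checks $a(t)$ is real and (using $\alpha'(0)=0$ from V3) that it is differentiable with $a(0)=0$, which is all one needs for the entropy computation via $S = i\frac{d}{dt}\omega((D\omega_{V(\rho)}:D\omega)_t)|_{t=0}$.

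The main obstacle I expect is the mismatch between the strong statement ``$\rho \in B(\infty)$ with $\rho'(0)=1$'' — where the clean factorization $V(\rho) = V(\rho_+)V(\rho_-)$ is available — and the weaker hypothesis $\rho(0)=0$ alone, under which $\rho_+$ need not be a diffeomorphism of $(0,+\infty)$ localized there (its derivative at $0$ is $\rho'(0) \neq 1$), so it may not define an element of $\al(0,+\infty)$ directly. The resolution is the approximation argument: one shows that $\rho$ can be approximated in the appropriate Sobolev topology by $\rho_n \in B(\infty)$ with $\rho_n'(0) = 1$, that the cocycles \eqref{eq:coc} converge strongly (using the continuity result of \cite{Carpi} that $g_n \to g$ in $W^{3/2,1}$ implies $e^{iT(g_n)} \to e^{iT(g)}$ strongly), and that the limit is still of the asserted form with $\rho_+$ now interpreted as the limiting (possibly only $C^1$) diffeomorphism and $a(t)$ the limit of $a_n(t)$. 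Verifying that this limiting object genuinely solves the cocycle equation — rather than merely being a strong limit of solutions — requires checking that the defining relations $\Delta_{\Omega,V(\rho)\Omega}^{it}\Delta_\Omega^{-it}$ pass to the limit, which follows once one knows $V(\rho_n)\Omega \to V(\rho)\Omega$ and the relative modular operators converge in the strong resolvent sense; this is the technical heart of the proof and the place where the estimate \eqref{eq:est} and the results of \cite{Carpi} do the real work.
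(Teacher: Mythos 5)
Your first step --- computing $(D\omega_{V(\rho)}:D\omega)_t$ as $\Delta_{V(\rho)\Omega,\Omega}^{it}\Delta_\Omega^{-it}$ via the split $V(\rho)=e^{i\phi}V(\rho_+)V(\rho_-)$ and Lemma \ref{lem:2}(iv), then using Bisognano--Wichmann and \eqref{eq:transformation} --- is fine, but it only covers the case $\rho'(0)=1$, which is not the content of the proposition: the whole point here is $\rho(0)=0$ with $\rho'(0)$ arbitrary, where $\rho_+$ has a derivative jump at $0$, is not in the domain of $V$, and no factorization $V(\rho)=vv'$ with $v\in\al(0,+\infty)$, $v'\in\al(0,+\infty)'$ is available. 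Your two proposed fixes for this case both have genuine gaps. Absorbing a dilation $\delta_s$ with $e^{-2\pi s}=1/\rho'(0)$ does make $\rho\delta_s$ fix $0$ with unit derivative, but it destroys the behaviour at infinity: $\rho\delta_s$ now acts like a dilation near $\infty$, so cutting at $0$ produces a map that fails to be $C^1$ (with unit derivative) at the point at infinity --- the kink is merely moved from $0$ to $\infty$, and the cut pieces are still outside the domain of $V$. The approximation route is where the real problem sits: you need that $\rho_n\to\rho$ (with $\rho_n'(0)=1$) implies convergence of the relative modular operators, or at least of the Connes cocycles, and you explicitly defer this ("this follows once one knows \dots the relative modular operators converge in the strong resolvent sense"), which is precisely the statement that would have to be proved; convergence of the vectors $V(\rho_n)\Omega\to V(\rho)\Omega$ does not by itself give it, and the paper itself warns (Section 5) that the relative entropy "does not behave well in the limit", so this is not a routine continuity fact. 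One would additionally have to control the phases $a_n(t)$ (Bott-cocycle bookkeeping) in the limit.

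The paper avoids all of this by a different and more elementary idea that your proposal misses: although $\rho_+$ alone is not $C^1$, the combination $\alpha_t=\rho_+\cdot\delta_t\cdot\rho_+^{-1}\cdot\delta_{-t}$ \emph{is} globally $C^1$, fixes $0$ and $\infty$ with unit derivative there, hence $u_t=V(\alpha_t)e^{ia(t)}$ is a well-defined unitary in $\al(0,+\infty)$ even when $V(\rho_+)$ is not defined. One then does not compute the relative modular operator at all, but verifies directly that $u_t$ has the two properties characterizing the Connes cocycle: (i) $\sigma^t_{\omega_{V(\rho)}}(x)=u_t\,\sigma^t_\omega(x)\,u_t^*$ on $\al(0,+\infty)$, checked on the generators $x=V(\tau)$ with $\textit{supp}\,g\subset(0,+\infty)$ using $\rho\,\delta_t\,\rho^{-1}\tau\,\rho\,\delta_{-t}\,\rho^{-1}=\rho_+\delta_t\rho_+^{-1}\tau\rho_+\delta_{-t}\rho_+^{-1}$ and the explicit Bott 2-cocycle \eqref{eq:bott}; and (ii) the chain rule $u_{t+s}=u_t\sigma^t_\omega(u_s)$, which reduces to the coboundary equation $a(t)+a(s)-a(t+s)=b(t,s)$ with $b(t,s)=cB([\rho_+,\delta_t],\delta_t[\rho_+,\delta_s]\delta_t^{-1})$; solvability follows because ${\bf b}b=0$ (a consequence of the cocycle identity for $B$) and $\mathbb{R}$ has no nontrivial 2-cocycles. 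To repair your write-up you would either have to supply the missing continuity theorem for relative modular operators/cocycles under your approximation, or switch to this direct algebraic verification, which is the route the paper takes.
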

\begin{proof}
We denote by $u_t$ the right hand side of equation \eqref{eq:coc}, with $a(t)$ to be determined. 
Notice that even though $\rho_+$ is not globally $C^1$, the combination $ \rho_+ \cdot \delta_t \cdot \rho^{-1}_+ \cdot \delta_{-t} $ is globally $C^1$ and so \eqref{eq:coc} is well defined. Moreover the diffeomorphism $\alpha=\rho_+ \cdot \delta_t \cdot \rho^{-1}_+ \cdot \delta_{-t}$ verifies $\alpha(0)=0$ and $\alpha'(0)=1$, so we have that $u_t $ belongs to $\al(0, +\infty)$ for every $t \in \mathbb{R}$. Therefore, the thesis follows if we find a function $a(t)$ such that $u_t$ verifies the relations

(i) $\sigma^t_{\omega_{V(\rho)}}(x)= u_t \sigma^t_\omega(x)u_t^*\,, \quad x \in \al(0, +\infty)\,,$

(ii) $u_{t+s}=u_t\sigma^t_\omega(u_s)\,.$

Note that the first relation suffices to be verified for $x=V(\tau)$, with $\tau=\textit{\em Exp}(g)$ and $\textit{\em supp}\, g \subseteq (0, + \infty)$. Notice also that, since $\rho(0)=0$,  we can apply the point (ii) of Lemma \ref{lem:2}. Therefore, by noticing that $\rho \cdot \delta_t \cdot \rho^{-1} \cdot \tau \cdot \rho \cdot \delta_{-t} \cdot \rho^{-1} = \rho_+ \cdot \delta_t \cdot \rho_+^{-1} \cdot \tau \cdot \rho_+ \cdot \delta_{-t} \cdot \rho_+^{-1}$ and by the explicit expression of the Bott 2-cocycle \eqref{eq:bott} we have
\begin{equation}
\begin{split}
\sigma^t_{\omega_{V(\rho)}}(V(\tau)) & = V(\rho)V(\delta_t)V(\rho)^*V(\tau)V(\rho)V(\delta_t^{-1})V(\rho^*) \\
& = V(\rho_+)V(\delta_t)V(\rho_+^{-1})V(\tau)V(\rho_+)V(\delta_t^{-1})V(\rho_+^{-1}) \\
& = V(\rho_+ \cdot \delta_t \cdot \rho^{-1}_+ \cdot \delta_{-t})V(\delta_t)V(\tau)V(\delta_t^{-1})V(\rho_+ \cdot \delta_t \cdot \rho^{-1}_+ \cdot \delta_{-t})^* \\
& = u_t \sigma^t_\omega(V(\tau))u_t^* \,.
\end{split}
\end{equation}
We now study the condition (ii). This is equivalent to
\begin{equation} \label{eq:sol}
  a(t)+a(s)-a(t+s) = b(t,s) \,,  
\end{equation}
where 
\[
b(t,s)= cB([\rho_+, \delta_t], \delta_t \cdot [\rho_+, \delta_s] \cdot \delta_t^{-1}) \,,
\]
using the usual notation $[g_1, g_2]=g_1 g_2 g_1^{-1}g_2^{-1}$ for the commutator in a group. We can rewrite this condition as {\bf b}a=b, where {\bf b} is the cocycle operator on the additive group $\mathbb{R}$. Since there are not non-trivial 2-cocycles on this group, solutions $a$ of \eqref{eq:sol} can be found provided that {\bf b}b=0, with
\[
{\bf b}b(t,s,r)  = b(t,s)-b(t+s,r)+b(t,r+s)-b(s,r) \,.
\]
By using the identity ${\bf b}B(g_1, g_2, g_3)=0$ one can directly verify this formula (see \cite{Hollands} for the explicit computation).  This concludes the proof.
\end{proof}
\begin{remark} \label{rem}
 Notice that the proposition can be easily adapted to a generic bounded interval $(a,b)$ of the real line, provided that $\rho(a)=a$ and $\rho(b)=b$. 
\end{remark}
We now notice that if $\rho$ is in $B(\infty)$ then $\rho \tau$ is still in $B(\infty)$ for each translation $\tau$. It then follows that, defined $\tau(u)=u+\rho^{-1}(0)$, we have that $\rho\tau$ fixes zero and by the $SL(2,{\mathbb{R}})$-invariance of the vacuum state we also have that $\omega_{V(\rho)}=\omega_{V(\rho)V(\tau)}= \omega_{V(\rho \tau)}$. This implies that the formula \eqref{eq:f1} still holds without the hypothesis $\rho(0)=0$. Now we notice that, by considering $\alpha(t)=u-t$ we have by \eqref{eq:cov} that
\begin{equation} \label{eq:relative}
    \begin{split}
         S_{(t, +\infty)}(\omega_{V(\rho)} \Vert \omega)  = S_{(0, +\infty)}(\omega_{V(\alpha \rho)} \Vert \omega)  =  \frac{c}{24}\int_{t}^{+ \infty} (u-t) \left( \frac{\eta''(u)}{\eta'(u)}\right)^2 du \,.
    \end{split}
\end{equation}

We can also provide a formula for the relative entropy given by exchanging the vacuum and the charged state. Indeed, if as above we use the notation $\eta = \rho^{-1}$, then we have
\begin{equation} \label{eq:sr}
\begin{split}
    S_{(t, +\infty)}(\omega \Vert \omega_{V(\rho)})  = S_{(\eta(t), +\infty)}(\omega_{V(\eta)} \Vert \omega) \,.
\end{split}
\end{equation}
We summarize the previous results in the following theorem.
\begin{thm} \label{thm}
	Let $\rho$ be a diffeomorphism in $B(\infty)$, e.g. $\rho= \textit{\em Exp}(f)$ where $f$ is a real smooth function on the real line such that $f(u) \to 0$ and $f'(u) \to 0$ if $u \to \infty$. Define $\eta = \rho^{-1}$. Then the relative entropies of the correspondent state are finite and given by the formulas
\begin{align*} 
        S_{(t, +\infty)}(\omega_{V(\rho)} \Vert \omega)   & = \frac{c}{24}\int_{t}^{+ \infty} (u-t) \left( \frac{\eta''(u)}{\eta'(u)}\right)^2  du \,, \label{eq:thm}  \\
S_{(t, +\infty)}(\omega \Vert \omega_{V(\rho)})   & = \frac{c}{24}\int_{\rho^{-1}(t)}^{+ \infty} (u-\rho^{-1}(t)) \left( \frac{\rho''(u)}{\rho'(u)}\right)^2  du \,.
\end{align*}
\end{thm}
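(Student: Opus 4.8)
The plan is to obtain both displayed identities by combining the covariance properties of the Araki relative entropy with the ingredients already assembled above, and then to handle the convergence of the integrals separately. For the first identity I would start from \eqref{eq:e1} with $\M=\al(0,+\infty)$, $\xi=\Omega$ and $U=V(\rho)$. Since $\rho\in B(\infty)$, one may first replace $\rho$ by $\rho\tau$ with $\tau(u)=u+\rho^{-1}(0)$: this is again in $B(\infty)$, it fixes $0$, and $V(\tau)$ fixes $\Omega$ up to a phase by $SL(2,\R)$-invariance, so the state is unchanged. Then $V(\rho\tau)$ factors (up to a phase) as $V(\rho_+)V(\rho_-)$ with $V(\rho_\pm)$ affiliated to $\al(0,\pm\infty)$ by Lemma \ref{lem:haag}, one writes $\log\Delta_\Omega^{\al(0,+\infty)}=-\frac{1}{2\pi}D$ by Bisognano--Wichmann, conjugates $D=\int u\,\Theta(u)\,du$ by $V(\rho_+)$ via the transformation law \eqref{eq:transformation}, and uses $(\Omega|\Theta(h)\Omega)=0$: only the Schwarzian cocycle term survives, giving \eqref{eq:f1}. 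Finally, applying the covariance formula \eqref{eq:cov} to the translation $\alpha_t(u)=u-t$ --- which sends $(t,+\infty)$ to $(0,+\infty)$, fixes $\Omega$ up to a phase, and keeps $\alpha_t\rho$ in $B(\infty)$ with $(\alpha_t\rho)^{-1}(u)=\eta(u+t)$ --- produces \eqref{eq:relative}, which is precisely the first displayed formula.

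For the second identity I would apply \eqref{eq:cov} with $U=V(\rho)^{-1}=V(\eta)$ (up to a phase), $\varphi=\omega$, $\psi=\omega_{V(\rho)}$ and $\M=\al(t,+\infty)$. Then $\varphi_U=\omega_{V(\eta)}$; $\psi_U=\omega_{V(\eta)V(\rho)}=\omega$, since $\eta\rho=\mathrm{id}$; and $V(\eta)\al(t,+\infty)V(\eta)^*=\al(\eta(t),+\infty)$ because $\eta$ is an orientation-preserving diffeomorphism of $\R$ fixing $+\infty$. This is \eqref{eq:sr}. Substituting $\eta$ for $\rho$ and $\eta(t)$ for $t$ in the first formula, and recalling $\eta^{-1}=\rho$ and $\eta(t)=\rho^{-1}(t)$, gives the second displayed formula.

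It remains to check that the integrals are finite; this I expect to be routine. On every compact subinterval of $(t,+\infty)$ the integrand $(u-t)(\eta''/\eta')^2$ is continuous, hence bounded, so only the tail at $+\infty$ matters. Through the Cayley transform, $\eta$ corresponds to a $C^\infty$ diffeomorphism of a neighbourhood of $-1$ on $S^1$ with unit derivative at $-1$; expanding there and transporting back gives $\eta(u)=u+a+O(1/u)$ as $u\to+\infty$, hence $\eta'(u)-1=O(1/u^2)$, $\eta''(u)=O(1/u^3)$ and $(u-t)(\eta''/\eta')^2=O(1/u^5)$, which is integrable. The same estimate applies to the integral defining $S_{(t,+\infty)}(\omega\Vert\omega_{V(\rho)})$, so all the relative entropies are finite.

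The genuinely substantive work --- the Connes-cocycle ansatz \eqref{eq:coc} and the basic formula \eqref{eq:f1} --- has been done above and would simply be invoked; in the present argument the only points requiring care are the bookkeeping of the cocycle phases in the covariance manipulations (so that $\omega_{V(\rho)V(\alpha)}=\omega_{V(\rho\alpha)}$, and likewise $\omega_{V(\eta)V(\rho)}=\omega$) and the stability of $B(\infty)$ under the compositions used. I do not anticipate a serious obstacle beyond this.
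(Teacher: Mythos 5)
Your covariance bookkeeping is the same as the paper's and is fine: the translation $\tau(u)=u+\rho^{-1}(0)$ combined with M\"obius invariance of the vacuum, the application of \eqref{eq:cov} with $\alpha_t(u)=u-t$ to pass from $(0,+\infty)$ to $(t,+\infty)$, and the conjugation by $V(\eta)$ giving \eqref{eq:sr} and hence the second displayed formula. The gap is in how you obtain \eqref{eq:f1}. After replacing $\rho$ by $\rho\tau$ you have $(\rho\tau)(0)=0$, but $(\rho\tau)'(0)=\rho'(\rho^{-1}(0))$, which is not $1$ in general ($\rho\in B(\infty)$ only controls the derivative at infinity). The splitting $V(\sigma)=V(\sigma_+)V(\sigma_-)$ up to a phase, with the pieces localized in $(0,+\infty)$ and $(-\infty,0)$ and Lemma \ref{lem:haag} applied to them, is available only when $\sigma$ fixes \emph{both} cut points $0$ and $\infty$ with unit derivative, i.e. $\sigma\in B(0,\infty)$: otherwise $\sigma_\pm$ have a corner at $0$, are not $C^1$, and are not in the domain of the extended representation $V$ (see the discussion of $B(z,z')$ at the end of Section 2). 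So the chain ``translate, factor, conjugate $D$'' proves \eqref{eq:f1} only under the extra hypothesis $\rho'(\rho^{-1}(0))=1$, which the theorem does not assume.

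This is not a cosmetic omission: removing the unit-derivative condition at the cut point is precisely the content of the Proposition (the Connes-cocycle ansatz \eqref{eq:coc}) together with the approximation argument of Hollands, which the paper carries out for $\rho(0)=0$ with arbitrary $\rho'(0)$ \emph{before} using the translation trick. Your closing remark that the cocycle ansatz ``would simply be invoked'' does not actually enter the argument you describe --- indeed, if your chain were valid as written, the ansatz would be unnecessary. The correct order is: (a) factorization argument for $\rho\in B(0,\infty)$; (b) cocycle ansatz and approximation to extend \eqref{eq:f1} to all $\rho\in B(\infty)$ with $\rho(0)=0$; (c) translation to drop $\rho(0)=0$; (d) covariance \eqref{eq:cov} and the conjugation by $V(\eta)$, which you handle correctly. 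A secondary point: your finiteness estimate assumes the induced circle diffeomorphism is smooth near $-1$, so that $\eta(u)=u+a+O(1/u)$, $\eta''(u)=O(1/u^3)$; membership in $B(\infty)$ only guarantees $C^1$ regularity at that point, and the paper instead argues convergence by passing to the circle through the Cayley transform and using the chain rule for the Schwarzian derivative, which is the argument you should use (or else add a regularity hypothesis at infinity).
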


\begin{cor}
	The states of Theorem \ref{thm} verify the QNEC inequality \cite{QNEC}, namely the function $\prescript{}{\rho}{S}(t) =  S_{(t, +\infty)}(\omega_{V(\rho)} \Vert \omega) $ 
	satisfies  $\prescript{}{\rho}{S}''(t) \geq 0$, since
	\begin{equation*}
	\begin{split}
	\prescript{}{\rho}{S}''(t) & = \frac{c}{24} \left( \frac{\eta''(t)}{\eta'(t)}\right)^2   \,.
	\end{split}
	\end{equation*}
\end{cor}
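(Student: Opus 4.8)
The plan is to obtain the stated identity by differentiating the first formula of Theorem \ref{thm} twice under the integral sign. Write
\[
\prescript{}{\rho}{S}(t) = \frac{c}{24}\int_t^{+\infty}(u-t)\,h(u)\,du\,, \qquad h(u) := \left(\frac{\eta''(u)}{\eta'(u)}\right)^2\,,
\]
and note at once that $h \geq 0$, because $\eta$ is an orientation-preserving diffeomorphism, so $\eta'>0$ and $h$ is the square of a real-valued smooth function. The whole content of the corollary is then to show that the second derivative of $\prescript{}{\rho}{S}$ equals $\frac{c}{24}\,h$.

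First I would check that the Leibniz rule applies. The integrand $(t,u)\mapsto (u-t)h(u)$ is jointly smooth, and by the convergence remark recorded just before Theorem \ref{thm} (via the Cayley transform the integrand becomes a bounded function on the upper half-circle) the integral $\int_t^{+\infty}(u-t)h(u)\,du$ is absolutely convergent; since $u-t\geq 1$ for $u\geq t+1$, this in turn forces $\int_t^{+\infty}h(u)\,du<\infty$, while on $[t,t+1]$ the function $h$ is continuous. The $t$-derivatives of the integrands that appear, namely $-h(u)$ and then $0$, are therefore dominated, locally uniformly in $t$, by integrable functions, so differentiation under the integral sign is justified; the boundary contribution from the moving endpoint vanishes since $(u-t)h(u)\big|_{u=t}=0$.

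Carrying this out, the first differentiation gives
\[
\prescript{}{\rho}{S}'(t) = \frac{c}{24}\left(-(u-t)h(u)\big|_{u=t} + \int_t^{+\infty}\partial_t\bigl[(u-t)h(u)\bigr]\,du\right) = -\frac{c}{24}\int_t^{+\infty}h(u)\,du\,,
\]
and differentiating this expression once more leaves only the boundary term at the lower limit,
\[
\prescript{}{\rho}{S}''(t) = -\frac{c}{24}\bigl(-h(t)\bigr) = \frac{c}{24}\left(\frac{\eta''(t)}{\eta'(t)}\right)^2 \geq 0\,,
\]
which is precisely the asserted formula and inequality. The only step that is not a pure formality is the interchange of differentiation and integration at the upper endpoint $u=+\infty$, and this is handled by the convergence already established before Theorem \ref{thm}; hence no genuine obstacle remains.
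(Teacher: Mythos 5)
Your proposal is correct and follows the same route the paper intends: the corollary is simply the twice-differentiation of the formula of Theorem \ref{thm}, giving $\prescript{}{\rho}{S}'(t)=-\frac{c}{24}\int_t^{+\infty}\left(\frac{\eta''(u)}{\eta'(u)}\right)^2du$ and then $\prescript{}{\rho}{S}''(t)=\frac{c}{24}\left(\frac{\eta''(t)}{\eta'(t)}\right)^2\geq 0$. Your added justification of the interchange of derivative and integral (using the absolute convergence noted before the theorem and continuity of the integrand) is a harmless, correct elaboration of what the paper leaves implicit.
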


Now we study the derivatives of  $ S_\rho (t) = S_{(t, +\infty)}(\omega \Vert \omega_{V(\rho)}) $.  Clearly $S_\rho(\rho(t))$ has negative derivative and so $S_\rho(t)$ is decreasing since $\rho$ is increasing. In particular, we have
\begin{equation} \label{eq:der}
    \begin{split}
        S'_\rho (\rho(t))\rho'(t) & = -\frac{c}{24}\int_{t}^{+ \infty} \left( \frac{\rho''(u)}{\rho'(u)}\right)^2 du \,,  \\
        S''_\rho (\rho(t))\rho'(t)^2  & = \frac{c}{24} \left( \frac{\rho''(t)}{\rho'(t)}\right)^2 - S'_\rho(\rho(t))\rho''(t)  \\
        & = \frac{c}{24} \frac{\rho''(t)}{\rho'(t)} \left( \frac{\rho''(t)}{\rho'(t)} + \int_{t}^{+ \infty} \left( \frac{\rho''(u)}{\rho'(u)}\right)^2 du \right) 
        \,.     
    \end{split}
\end{equation}
In this case we can notice that the relative entropy is convex in the average, that is if $\rho = \textit{\em Exp}(f)$ and  $[a,b]$ contains the support of $f$ then
\[
\int_a^b S_\rho''(t)dt = \frac{c}{24}\int_{a}^{b} \left( \frac{\rho''(t)}{\rho'(t)} \right) ^2 du \geq 0 \,,
\]
where this identity follows from \eqref{eq:der} and from the fact that $\rho(u)=u$ outside $[a,b]$. However, in this case the second derivative is not always positive, as shown by the following counterexample. 

\subsection{A counterexample about the second derivative}
Let us consider the function 
\[
f(x) = \left \{ \begin{array}{lr}
\frac{1}{1+ \tan(x)^2} \quad -\pi/2 \leq x \leq \pi/2  \\
0 \qquad \qquad \quad \textit{\em otherwise}
\end{array}
\right. \,.
\]
This is a $C^1$ function with compact support and smooth except that on the points $\pm \pi/2$. We now compute its exponential map $\rho$. Clearly $\rho(u)=u$ outside the interval $[- \pi/2, \pi/2]$, so we can suppose $u \in (- \pi/2, \pi/2)$. Notice that the equation \eqref{eq:flow} can be seen as a family of Cauchy problems
\[
\frac{d}{dt}\rho^u(t)=f(\rho^u(t))\,, \quad \rho^u(0)=u \,,
\]
with $\rho^u(t)=\rho_t(u)$. If $f(u) \neq 0$ then $\rho^u (t) = F_u^{-1}(t)$, with
\begin{equation} \label{eq:F_s(u)}
 F_u(s)= \int_u^s \frac{dv}{f(v)} = \tan (s) - \tan (u) \,.   
\end{equation}
It then follows that
$
\rho_t(u) = F_u^{-1}(t) = \arctan (\tan(u)+t)\,
$
and hence
$
\rho(u)= \arctan (\tan(u)+1)\,.
$
In particular, we have $\rho''(0)/\rho'(0)=-1$. Moreover, by numerical integration one obtains that
\[
\int_0^{\pi/2} \left( \frac{\rho''(u)}{\rho'(u)}\right)^2 du \sim 1.4 \,.
\]
Therefore, by \eqref{eq:der} we obtain
\[
S_\rho''(\pi/4)/4 \sim -c/60 \,,
\]
as announced before. 

\section{Relative entropy on a bounded interval}
In this section we compute the relative entropy between the vacuum state $\omega$ and a unitary state $\omega_{V(\rho)}$ on a bounded interval.\\

First of all we notice that the dilation operator on a bounded interval $I=(a,b)$ can be computed as $D_{(a,b)}= \Theta (D_{(a,b)}(u))$, with
\[
D_{(a,b)}(u) = \frac{1}{b-a}(b-u)(u-a) \,.
\]
Consider a diffeomorphism $\rho$ in $B(\infty)$. We proceed by cases. \\

Suppose that $\rho$ belongs to $B(a,b)$, that is $\rho $ fixes $a$ and $b$ and has unital derivative in such points. As in the case of the half-line we have that $V(\rho)=V(\rho_+)V(\rho_-)$ up to a phase, with $V(\rho_+)$ in $\mathcal{A}(a,b)$ and $V(\rho_-)$ in $\mathcal{A}(a,b)'$. Therefore we can take advantage of the formula \eqref{eq:e2}, and integrating by parts we obtain
\begin{equation*} \label{eq:bounded}
    \begin{split}
        S_{(a,b)}(\omega \Vert \omega_{V(\rho)}) 
        & = \frac{c}{24}\int_a^b D_{(a,b)}(u) \left( \frac{\rho''(u)}{\rho'(u)} \right)^2 du + \frac{c}{12} \int_a^b D'_{(a,b)}(u) \left( \frac{\rho''(u)}{\rho'(u)} \right) du \\
        & = \frac{c}{24}\int_a^b D_{(a,b)}(u) \left( \frac{\rho''(u)}{\rho'(u)} \right)^2 du + \frac{c/6}{b-a} \int_a^b   \log \rho'(u) du \,.
        \end{split}
\end{equation*}

Now we generalize the previous equation to the case in which $\rho'(a)$ and $\rho'(b)$ are generic. Given $r>0$, consider the sequence of functions
\begin{equation} \label{eq:h}
    \begin{split}
        h_n(u)= (n \log r)^{-1} (e^{n (\log r)u}-1)\,.
    \end{split}
\end{equation}

Notice that $h_n(0)=0$, $h_n(1/n) \to 0$ if $n \to + \infty$, $h_n'(0)=1$ and $h'_n(1/n)=r$. Notice also that
\begin{equation} \label{eq:int}
    \int_0^{1/n}u \left( \frac{h''_n(u)}{h'_n(u)} \right)^2 du = \frac{(\log r)^2}{2}\,.
\end{equation}

If we denote the function $\eqref{eq:h}$ by $h_n^r$ and we set $r_a = \rho'(a)$, $r_b= \rho'(b)$ then we can define
\[
h^1_n (u)= a+h_n^{r_a}(u) \,, \quad h^2_n (u) = b- h_n^{r_b}(b-u)\,.
\]
We now consider the following maps: given to intervals $[a,b]$ and $[c,d]$, let $g_{[a,b]}^{[c,d]}(u)=mu+q$ be the affine function mapping $[a,b]$ to $[c,d]$. If $a_n = a + 1/n$ and $b_n = b-1/n$, then we define
\[
g^1_n = g^{[h^1_n(a_n), h^2_n(b_n)]}_{[a,b]}\,, \quad g^2_n = g^{[a, b]}_{[a_n,b_n]}\,.
\]
Finally, we consider the following sequence of functions:
\[
\rho_n(u) = \left \{ \begin{array}{lr}
u \quad \quad u \leq a\, , \; u \geq b \\
h^1_n(u) \quad \quad a \leq u \leq a_n  \\
g^1_n \rho g^2_n(u) \quad \quad a_n \leq u \leq b_n \\
h^2_n(u)  \quad \quad b_n \leq u \leq b \\
\end{array}
\right. \,.
\]
Up to mollify a bit $\rho_n$ in $a_n$ and $b_n$, we have that $\rho_n$ is a sequence of $C^1$ functions such that $\rho_n'(a)=\rho'_n(b)=1$. Moreover, by \eqref{eq:int} one can notice that
\begin{equation} \label{eq:lim}
   \int_a^b D_{(a,b)}(u) S\rho_n (u) du \to - \frac{(\log r_a)^2 + (\log r_b )^2}{4} + \int_a^b D_{(a,b)}(u) S\rho (u) du \,. 
\end{equation}
Now we arrive to the crucial point of the proof. The idea is to approximate $S_{(a,b)}(\omega \Vert \omega_{V(\rho)})$ with $S_{(a,b)}(\omega \Vert \omega_{V(\rho_n)})$, since for the functions $\rho_n$ formula \eqref{eq:bounded} holds. Unfortunately, the relative entropy does not behave well in the limit. However, by studying the Bott 2-cocycle \eqref{eq:bott} it is shown in \cite{Hollands} that $S_{(a,b)}(\omega \Vert \omega_{V(\rho)})$ and $ \lim_n S_{(a,b)}(\omega \Vert \omega_{V(\rho_n)})$ are both solutions of an equation whose solutions are unique up to a constant term $m_\rho$. More precisely, this term depends only on the derivatives $r_a=\rho'(a)$ and $r_b=\rho'(b)$. Therefore by \eqref{eq:lim} we obtain that
\[
S_{(a,b)}(\omega \Vert \omega_{V(\rho)}) = \nu(r_a, r_b) + \frac{c}{24}\int_a^b D_{(a,b)}(u) \left( \frac{\rho''(u)}{\rho'(u)} \right)^2 du + \frac{c/6}{b-a} \int_a^b   \log \rho'(u) du
\]
for some function $\nu(r_a, r_b)$ which we are now going to prove is zero. To do this, we will construct sequences of functions $\rho_n$ with the same derivatives as $\rho$ at $u=a$ and $u=b$. For simplicity we consider $(a,b)= (0,3)$. The general case will follow by covariance, that is by noticing that
\[
S_{(a,b)}(\omega \Vert \omega_{V(\rho)} )  = S_{(0,3)}(\omega \Vert \omega_{V( \alpha \rho \alpha^{-1})} ) \,, 
\]
with $\alpha(u)=cu+d $ in the Moebius group mapping $(0,3)$ in $(a,b)$. 

We start by proving that $0 \leq \nu(r_0, r_3)$.  Given $r>0$, consider the sequence of functions
\begin{equation} \label{eq:sigma}
\sigma_n(u) = \frac{\log n}{\log (n/r)} \left[ (u+ 1/n)^{\log(n/r)/\log(n)} - (1/n)^{\log(n/r)/\log(r)} \right] \,.    
\end{equation}

We notice that $\sigma_n(0)=0$, $\sigma_n(1-1/n)=\frac{\log n}{\log (n/r)} \left( 1- \frac{r}{n} \right) \to 1$ and also $\sigma_n'(1-1/n)=1$. If we denote the function \eqref{eq:sigma} by $\sigma_n^r$, then we define
\[
\rho_n(u) = \left \{ \begin{array}{lr}
\sigma_n^{r_0}(u) \qquad \qquad 0 \leq u \leq 1-1/n \\
3-\sigma_n^{r_3}(3-u) \quad  2+ \frac{1}{n} \leq u \leq 3  \\
\gamma_n(u)  \qquad \qquad \textit{\em otherwise} \\
\end{array}
\right. \,,
\]
with $\gamma_n$ a smooth function such that $\rho_n$ is $C^1$. Moreover, since $\rho_n(1-1/n) \to 1$ and $\rho_n (2+1/n) \to 2$, then we can suppose that $\gamma_n$ converges uniformly with its derivatives (up to the second order) to the identity function  on $[1,2]$. In particular we can suppose that
\[
\int_{1-1/n}^{2+1/n} \rho_n(u) du \to 0 \quad \text{if  } n \to \infty \,.
\]
Therefore, by the positivity of the relative entropy we have
\begin{equation*}
\begin{split}
     0 & \leq \nu (r_0, r_3) + \frac{c}{24} \int_0^3 D_{(0,3)}(u) \left( \frac{d}{du} \log \rho'_n(u) \right)^2 du + \frac{c}{18}\int_0^3 \log \rho'_n(u)du \\
     & \sim \nu(r_0, r_3)  + \frac{c}{24} \left[ \left( \frac{\log r_0}{\log n} \right)^2 \int_0^{1-1/n} \frac{D_{(0,3)}(u)du}{(u+1/n)^4} 
     + \left( \frac{\log r_3}{\log n} \right)^2 \int_{2+1/n}^{3} \frac{D_{(0,3)}(u)du}{(3-u+1/n)^4}\right] \\
     & + \frac{c}{18}\left[ \frac{\log r_0}{\log n}\int_0^{1-1/n}\frac{du}{u+1/n}+ \frac{\log r_3}{\log n}\int_{2+1/n}^{3}\frac{du}{(3-u+1/n)} \right] \\
     & \to \nu(r_0, r_3)\,,
\end{split}
\end{equation*}
where $\sim$ means the equality up to a term going to zero. This proves that $\nu \geq 0$. Now we prove the other inequality.

Given $r>0$, consider
\begin{equation} \label{eq:zeta}
    \zeta_n(u)= - \frac{1}{n}+ \int^u_{-1/n}\exp[(\log r)(ns+1)^{1/n}]ds\,.
\end{equation}
Notice that $\zeta_n(-1/n)= -1/n$, $\zeta'_n(-1/n)=1$ and $\zeta_n'(0)=r$. Notice also that $\zeta_n(0) \to 0$ and $\frac{d}{du}\log \zeta_n'(u) = (\log r) (1+nu)^{1/n-1}$. Always in the case $I=(0,3)$, if we denote the function \eqref{eq:zeta} by $\zeta_n^r$ then we can define
\[
\rho_n(u) = \left \{ \begin{array}{lr}
\zeta^{r_0}_n(u) \qquad \qquad  -1/n \leq u \leq 0 \\
\sigma_n^{r_0}(u)+c_n \quad  0 \leq u \leq 1-1/n  \\
3-\sigma_n^{r_3}(3-u)+d_n \quad  2+1/n \leq u \leq 3 \\
3-\zeta_n^{r_3}(3-u) \quad  3 \leq u \leq 3+1/n \\
\gamma_n(u)  \qquad \qquad \textit{\em otherwise} \\
\end{array}
\right. \,,
\]
with $\gamma_n$, $c_n$ and $d_n$ such that $\rho_n$ is $C^1$. Notice that $c_n \to 0$ and $d_n \to 0$, so that we can suppose that $\gamma_n(u) \to u$ in $[1,2]$ as before. Moreover, if we mollify $\zeta_n$ at $u=0$ then by monotonicity we get
\begin{equation} \label{eq:ineq}
    S_{(0,3)}(\omega \Vert \omega_{V(\rho_n)}) \leq S_{(-1/n , 3+1/n)}(\omega  \Vert \omega_{V(\rho_n)}) \,.
\end{equation}
Notice that on the right side of $\eqref{eq:ineq}$ the term $\nu(r_0, r_3)$ does not compare. Therefore, up to a term going to zero we have
\[
\nu(r_0, r_3) \leq \frac{c}{24} I_n + \frac{c/6}{3+2/n} J_n \,,
\]
with 
\begin{equation*}
    \begin{split}
        I_n & = d_n(\log r_0)^2 \int^0_{-1/n} (1+nu)^{-2(1-1/n)}du  + d_n(\log r_3)^2 \int_3^{3+1/n}(1+n(3-u))^{-2(1-1/n)} du \,, \\
        J_n & = (\log r_0)\int^0_{-1/n} (1+nu)^{1/n}du + (\log r_3)\int_3^{3+1/n}(1+n(3-u))^{1/n} du \,.
    \end{split}
\end{equation*}
But by direct computation and by the estimate $D_{(-1/n, 3+1/n)}(u) \leq u + 1/n$ one has that $I_n \to 0$ and $J_n \to 0$, and so $\nu(r_0, r_3) \leq 0$, as required. \\

We can then conclude with the following formula: if $\rho(a)=a$ and $\rho(b)=b$, then
\begin{align} 
      S_{(a,b)}(\omega \Vert \omega_{V(\rho)}) & = \frac{c}{24}\int_a^b D_{(a,b)}(u)\left( \frac{\rho''(u)}{\rho'(u)} \right)^2 du + \frac{c/6}{b-a}\int_a^b \log \rho'(u) du \label{eq:bfor1} \\
      & = - \frac{c}{12}\int_a^b D_{(a,b)}(u)S\rho(u) du + \frac{c}{12}\log \rho'(a)\rho'(b) \,, \label{eq:bfor2}
\end{align}
where the second expression is obtained by integration by parts. Clearly this formula can be generalized by considering a transformation $\alpha$ in the Moebius group such that $\eta \alpha$ fixes $\eta(a)$ and $\eta(b)$, since in this case we have
\begin{equation} \label{eq:other}
      S_{(a,b)}(\omega_{V(\rho)} \Vert \omega)  =
      S_{\eta(a,b)}(\omega \Vert \omega_{V(\eta)}) = S_{\eta(a,b)}(\omega \Vert \omega_{V(\eta \alpha)}) \,,
\end{equation}
and the formula above applies. \\

Therefore, to generalize the formula \eqref{eq:bfor2} to a generic diffeomorphism $\rho$ in $B(\infty)$, it suffices to explicitly find a diffeomorphism $\alpha$ in the Moebius group such that $\rho \alpha (a)=a$ and $\rho \alpha (b) = b$.  By direct computation one finds the following result.
\begin{thm} \label{thm:bdd}
	Let $\rho$ be a diffeomorphism in $B(\infty)$, for example $\rho = \text{\em Exp}(f)$ where $f$ is a real smooth function on the real line such that $f(u) \to 0$ and $f'(u) \to 0$ as $|u| \to \infty$. If $(a,b)$ is a bounded interval, then
\begin{equation*}
    \begin{split}
S_{(a,b)}(\omega \Vert \omega_{V(\rho)}) & = -\frac{c}{12} \int_{\rho^{-1}(a)}^{\rho^{-1}(b)} D_{\rho^{-1}(a,b)}S\rho(u) du \\ & + \frac{c}{12}\log \rho'(\rho^{-1}(a))\rho'(\rho^{-1}(b))+\frac{c}{12}\log \left( \frac{\rho^{-1}(b)-\rho^{-1}(a)}{b-a} \right)^2\,.        
    \end{split}
\end{equation*}
Similarly, by applying $\eta=\rho^{-1}$ we have that
\begin{equation*}
\begin{split}
S_{(a,b)}(\omega_{V(\rho)} \Vert \omega) & = -\frac{c}{12} \int_{a}^b D_{(a,b)}S\eta(u) du \\ & + \frac{c}{12} \log \eta'(a)\eta'(b)-\frac{c}{12}\log \left( \frac{\eta(b)-\eta(a)}{b-a} \right)^2\,.   
\end{split}
\end{equation*}
\end{thm}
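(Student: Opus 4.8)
The plan is to reduce everything to formula \eqref{eq:bfor2}, which already evaluates the entropy whenever the diffeomorphism fixes the two endpoints of the interval, by pre-composing $\rho$ with an affine Moebius map that absorbs the endpoint mismatch. Put $\eta=\rho^{-1}$ and let $\alpha(u)=mu+q$ be the unique affine map with $\alpha(a)=\eta(a)$ and $\alpha(b)=\eta(b)$; since $\eta$ is increasing and $(a,b)$ is bounded we have $m=(\eta(b)-\eta(a))/(b-a)>0$, so $\alpha$ lies in the Moebius group. By the $SL(2,\R)$-invariance of the vacuum (exactly as in the half-line computation, where $\omega_{V(\rho)}=\omega_{V(\rho\tau)}$) one gets $\omega_{V(\rho\alpha)}=\omega_{V(\rho)}$, while $\rho\alpha$ fixes $a$ and $b$. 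Hence \eqref{eq:bfor2} applies to $\rho\alpha$ on $(a,b)$ and gives
\[
S_{(a,b)}(\omega \Vert \omega_{V(\rho)}) = S_{(a,b)}(\omega \Vert \omega_{V(\rho\alpha)}) = -\frac{c}{12}\int_a^b D_{(a,b)}(u)\,S(\rho\alpha)(u)\,du + \frac{c}{12}\log (\rho\alpha)'(a)(\rho\alpha)'(b)\,.
\]

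What is left is a direct manipulation of this expression. Since $\alpha$ is affine, $S\alpha\equiv 0$ and $\alpha'\equiv m$, so the Schwarzian chain rule stated above gives $S(\rho\alpha)(u)=m^2\,S\rho(\alpha(u))$ and $(\rho\alpha)'(u)=m\,\rho'(\alpha(u))$. The boundary term then splits as $\tfrac{c}{12}\log m^2+\tfrac{c}{12}\log\rho'(\eta(a))\rho'(\eta(b))$, and $\log m^2=\log\big((\eta(b)-\eta(a))/(b-a)\big)^2$ is exactly the third term in the claimed formula (recall $\eta=\rho^{-1}$). For the integral, the change of variable $v=\alpha(u)$ turns $\int_a^b$ into $\int_{\eta(a)}^{\eta(b)}=\int_{\rho^{-1}(a)}^{\rho^{-1}(b)}$, contributes a Jacobian $1/m$ and the factor $m^2$ from $S(\rho\alpha)$; using $m(b-a)=\eta(b)-\eta(a)$ one checks the elementary identity
\[
m\,D_{(a,b)}\big(\alpha^{-1}(v)\big) = D_{\rho^{-1}(a,b)}(v)\,,
\]
so that $-\tfrac{c}{12}\int_a^b D_{(a,b)}\,S(\rho\alpha)$ becomes $-\tfrac{c}{12}\int_{\rho^{-1}(a)}^{\rho^{-1}(b)} D_{\rho^{-1}(a,b)}\,S\rho$. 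Collecting the three contributions yields the first displayed formula of the theorem.

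The second formula follows from the first by the symmetry $\rho\leftrightarrow\eta=\rho^{-1}$ together with covariance. Applying \eqref{eq:cov} with the unitary $U=V(\eta)=V(\rho)^*$ and using that $V(\eta)V(\rho)$ is a phase, one obtains $S_{(a,b)}(\omega_{V(\rho)}\Vert\omega)=S_{\eta(a,b)}(\omega\Vert\omega_{V(\eta)})$, precisely as in \eqref{eq:other}. Substituting the first formula applied to $\eta$ on the interval $(\eta(a),\eta(b))$ and simplifying with $\eta^{-1}=\rho$ (so $\eta^{-1}(\eta(a))=a$, $\eta^{-1}(\eta(b))=b$, $\eta^{-1}(\eta(a,b))=(a,b)$) gives exactly the stated expression for $S_{(a,b)}(\omega_{V(\rho)}\Vert\omega)$, the last logarithm changing sign because $(b-a)/(\eta(b)-\eta(a))$ is the reciprocal of $(\eta(b)-\eta(a))/(b-a)$.

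The one point that needs care — and which I expect to be the main obstacle — is justifying that \eqref{eq:bfor1}--\eqref{eq:bfor2} really do apply to $\rho\alpha$ (resp.\ $\eta\alpha$), since this diffeomorphism is not in $B(\infty)$: its derivative at $\infty$ equals $1/m\neq 1$. This is handled exactly as in the paragraph preceding the theorem, where the validity of \eqref{eq:bfor2} is extended, by covariance under a Moebius transformation $\alpha$, to diffeomorphisms of the form (element of $B(\infty)$) composed with (Moebius transformation) fixing the two endpoints of the interval, the potential constant discrepancy $\nu$ having already been shown to vanish. Once this is granted, the remainder of the argument is the routine bookkeeping described above.
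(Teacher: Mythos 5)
Your proposal is correct and follows essentially the same route as the paper: reduce to formula \eqref{eq:bfor2} by composing with the affine Moebius map $\alpha$ sending $a,b$ to $\rho^{-1}(a),\rho^{-1}(b)$ (using Moebius invariance of the vacuum), then apply the Schwarzian chain rule and the change of variables $v=\alpha(u)$, and obtain the second identity from the first via the covariance relation \eqref{eq:cov}/\eqref{eq:other}. The technical point you flag about $\rho\alpha$ no longer lying in $B(\infty)$ is treated at the same level of detail in the paper itself, so there is no substantive divergence.
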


\section{Energy Bounds}

We now apply the previous formulas to verify some conditions expected to hold in QFT.

Consider the state $ \omega \cdot  \Phi$, with $\Phi(\cdot) = V(\rho)^* \cdot V(\rho)$. Notice that the conjugate charge of $\tau (\cdot)= V(\rho) \cdot V(\rho)^*$ is $\Bar{\tau} = \Phi_\rho$: the conjugate equation is trivially satisfied and $d(\tau)=1$. Since the positive generator of translations is
\[
H = \int_{- \infty}^{+ \infty} \Theta(u) du \, ,
\]
then the mean vacuum energy of the charge ${\tau}$ is given by
\begin{equation} 
    \begin{split}
        E & = (\Omega|V(\rho)H V(\rho)^* \Omega) \\
        & = \frac{c}{48 \pi} \int_{- \infty}^{+ \infty}\left( \frac{\rho''(u)}{\rho'(u)}\right)^2 du \,.
    \end{split}
\end{equation}
Before proceeding, we do the following remark: if $\gamma$ is an orientation-preserving diffeomorphism fixing $a$ and $b$, then by the Jensen inequality
\begin{equation} \label{eq:Jensen}
    \begin{split}
\frac{1}{b-a}\int_a^b \log \gamma'(u) du \leq \log \left( \frac{1}{b-a}\int_a^b \gamma'(u) du \right) = 0 \,.
    \end{split}
\end{equation}
We now want to estimate the relative entropy
\[
S(r) = S_{(-r,r)}(\omega \Vert \omega_{V(\rho)})\,.
\]
To this purpose, we notice that if $\alpha(u)=cu+d$ is a Moebius transformation such that $\alpha \rho (\pm r) = \pm r$ then by \eqref{eq:bfor1} we have that
\begin{equation*}
\begin{split}
      S(r) &  \leq \frac{c}{24}\int_{-r}^{r} D_{(-r,r)}(u)\left( \frac{ \rho''(u)}{\rho'(u)} \right)^2 du \,,
\end{split}
\end{equation*}
and so as before we obtain 
\[
S_{(-r,r)}(\omega \Vert \omega_{V(\rho)}) \leq \pi r E\,,
\]
which is related to the Bekenstein Bound \cite{comment}.

Similarly, we now verify the Bekenstein Bound for $\Bar{S}(r)=S_{(-r,r)}(\omega_{V(\rho)} \Vert \omega)$. We denote by $\Bar{E}$ the mean vacuum energy in the representation $\Bar{\tau}$, that is
\begin{equation} \label{eq:en}
    \Bar{E} = (\Omega | V(\rho)^* H V(\rho) \Omega) = \frac{c}{48 \pi} \int_{- \infty}^{+ \infty} \left( \frac{\eta''(u)}{\eta'(u)} \right)^2 du \,.
\end{equation}
If $\alpha(u)=cu+d$ is a transformation in the Moebius group such that $\eta \alpha$ fixes $\eta( \pm r)$, then by \eqref{eq:other} and \eqref{eq:Jensen} we have
\begin{equation}
    \begin{split}
\Bar{S}(r)
& \leq \frac{cr}{48}\int_{-r}^{r} \left( \frac{  \eta ''(u)}{\eta '(u)} \right)^2 du  \,.
    \end{split}
\end{equation}
Therefore, by the identity \eqref{eq:en} we have
\[
\Bar{S}(r) \leq \pi r \Bar{E} \,.
\]
We can also notice that the QNEC \cite{proof} is verified with an equality: if by $S(t)$ we denote the relative entropy \eqref{eq:relative} and
\[
\Bar{E}(t,t')= \frac{c}{24 \pi} \int_{t}^{t'}\left( \frac{\eta''(u)}{\eta'(u)}\right)^2 du \,, \qquad \Bar{E}(t) = \lim_{t' \to t} \frac{E(t,t')}{t'-t}\,,
\]
then we can notice that the QNEC $\Bar{E}(t) \geq S''(t)/2 \pi$ holds with the equality 
\[
\Bar{E}(t)= \frac{1}{2 \pi}S''(t) \geq 0 \,.
\]

\section{A remark about the extensivity} 

It can be easily noticed by the formulas given above that if $S_f=S_I(\omega_f \Vert \omega)$ for some interval $I$, then 
$S_{f_1 + f_2} = S_{f_1}+S_{f_2}$ if the supports of $f_1$ and $f_2$ are disjoint (up to a set of zero measure). Clearly if the supports are not disjoint then this fact is not more true. Therefore, if we define $S_f(t)=S_{(t, + \infty)}(\omega_{V(\rho)} \Vert \omega)$ then we will have 
\[
S_{f_1 + f_2}(t) = S_{f_1}(t) + S_{f_2}(t) + s_t(f_1, f_2)
\]
for some term $s_t(f_1, f_2)$. In this section we give an estimate of $s_t(f_1, f_2)$. \\

Let $ [a_1, b_1] $ and $ [a_2, b_2] $  be the supports of $f_1$ and $f_2$, with $b_1 \leq b_2$. Clearly $s_t(f_1, f_2)=0$ if $t \geq b_1$, since for such values of $t$ we have that $f_1$ does not contribute to the relative entropy. So we can suppose $t \leq b_1$. 

Before proceeding we make a general remark. Consider a real function $f$ with compact support. We recall that if $f(u) \neq 0$ then the exponential flow $\rho_t(u)$ of $f$ is obtained by inverting the function $F_u(s)$ defined in \eqref{eq:F_s(u)}. Then by deriving the relation $t=F_u(\rho_t(u))$ with respect to the variable $u$ we have
\begin{equation} \label{eq:derivative}
	\partial_u \rho_t(u) = \frac{f(\rho_t(u))}{f(u)}\,.
\end{equation}
Deriving again and applying the obtained formulas to $\rho_{-1} = \rho^{-1}= \eta $ and $f=f_1 + f_2$ one obtains
\[
\frac{\eta ''}{\eta '} = \frac{\eta''_1}{\eta'_1} + \frac{\eta''_2}{\eta'_2} + \delta(f_1, f_2)\,,
\]
with 
\[
\delta(f_1 , f_2)(u) = f'(\eta(u))- f'_1(\eta_1(u))-f'_2(\eta_2(u)) \,.
\]
Notice that if $\textit{\em supp}f_1 \cup \textit{\em supp}f_2 \subseteq [a,b]$ then
\[
\Vert \delta(f_1 , f_2) \Vert_{\infty} \leq |b-a|\cdot \Vert f''_1 \Vert_\infty + |b-a|\cdot \Vert f''_2\Vert_\infty  \,.
\]
We use this fact to estimate
\begin{equation*}
\begin{split}
\frac{c}{24}\int_t^{\infty} (u-t) \delta(f_1 , f_2)^2 du & \leq \frac{c}{24} \left( \Vert
f''_1\Vert_\infty + \Vert f''_2 \Vert_\infty \right)^2(b-a)^2\frac{(b_1-t)^2}{2} = \epsilon_0(t) \,. 
\end{split}
\end{equation*}
Moreover, by applying Cauchy-Schwarz with respect to the measure $(u-t)du$ on $(t, +\infty)$ we have
\[
\frac{c}{12}\int_t^{\infty} (u-t) \frac{\eta''_i(u)}{\eta'_i(u)}\delta(f_1,f_2)(u) du \leq \frac{c}{12} \sqrt{S_{f_i}(t) \epsilon_0(t)}= \epsilon_i(t) \,,
\]
for $i=1,2$. Always by Cauchy-Schwarz we have
\[
\frac{c}{12}\int_t^{\infty} (u-t) \frac{\eta_1'' \eta_2'' }{\eta_1' \eta_2'}  du \leq 2 \sqrt{S_{f_1}(t) S_{f_2}(t)} = \epsilon_3(t) \,,
\]
and therefore we can conclude that $|s_t(f_1, f_2)| \leq \epsilon(t)$, with
\[
\epsilon(t)=\epsilon_0(t)+\epsilon_1(t)+\epsilon_2(t)+\epsilon_3(t) \,.
\]
We conclude by noticing that, since $S_{f_1}(t)$ vanishes with its first derivative at $t=b_1$, then $\epsilon(t) \leq C(b_1-t)$ for $t$ near to $b_1$ for some $C>0$.

\section*{Acknowledgements}

I deeply thank Roberto Longo for suggesting me the problem and for the encouragement, and Yoh Tanimoto and Simone del Vecchio for some useful observations.

\end{document}